\newtheorem{thm}{Theorem}
\theoremstyle{definition}
\theoremstyle{remark}
\newcommand{\thmref}[1]{Theorem~\ref{#1}}
\newcommand{\cpr}[2]{\mathbf{c}_{#1}^{#2}}
\newcommand{\cpro}{\cpr{}{\textsc{o}}}
\newcommand{\cprr}{\cpr{}{r}}
\newcommand{\tmi}[2]{\mathbf{M}_{#1}^{#2}}
\newcommand{\tmio}{\tmi{}{o}}
\newcommand{\Pp}{\mathcal{P}}
\newcommand{\NP}{\mathcal{NP}}
\newcommand{\pnpvs}{$\Pp vs.\NP$}
\newcommand{\pnpeq}{$\Pp=\NP$}
\newcommand{\pnpnq}{$\Pp\neq\NP$}
\newcommand{\Pon}{P_{1}}
\newcommand{\Ptw}{P_{2}}
\newcommand{\Pth}{P_{3}}
\newcommand{\Qon}{Q_{1}}
\newcommand{\Qtw}{Q_{2}}
\newcommand{\Qth}{Q_{3}}
\newcommand{\dsat}{$D_{sat}$}
\newcommand{\ndsat}{$ND_{sat}$}
\newcommand{\trt}{$\textbf{t}$}
\newcommand{\dtm} [1] {DTM{#1}}
\newcommand{\ndtm} [1] {NDTM{#1}}
\newcommand{\apmachine} [1] {APM{#1}}
\newcommand{\Pproblems} [1] {P problems{#1}}
\newcommand{\npcomp} [1] {NPC{#1}}
\begin{document}
\title{\pnpnq $ $ by Modus Tollens}
\author{Joonmo Kim \\ Applied Computer Engineering Department, Dankook University \\ South Korea \\ \texttt{chixevolna@gmail.com}}
\maketitle

\bibliographystyle{plain}

\begin{abstract}
An artificially designed Turing Machine algorithm $\tmio$
generates the instances of the satisfiability problem, and
check their satisfiability.
Under the assumption \pnpeq,
we show that
$\tmio$ has a certain property,
which,
without the assumption,
$\tmio$ does not have.
This leads to \pnpnq $ $ by modus tollens.
\end{abstract}


\paragraph{Notice}
Version 2 of this article, \cite{j}, is the final writing for the proof. Later versions are for adding meta data and replies. Version 5 is for adding comments for easier readings: subsubsection 0.1.1 and subsection 0.2. Version 6 is for modifying subsection 0.2. Version 7 is for adding Addendum of \cite{j} at page 8.

\section{Introduction}
This formula $(\Pon \rightarrow (\Ptw \rightarrow \Pth)) \wedge (\neg (\Ptw \rightarrow \Pth))$
concludes $\neg \Pon$ by modus tollens.
As an example, we may easily prove that: it is false that each of all numbers can be described by a
finite-length digits. Simple proof is somehow showing the number of infinite-length digits.
Instead, a boy makes use of the modus tollens as follows.
He lets the propositions be:
\\$\Pon$: each of all numbers can be described by a finite-length digits,
\\$\Ptw$: there exists the irrational number,
\\$\Pth$: the irrational number can be described by a finite-length digits.

Then we know that $\Pon \rightarrow (\Ptw \rightarrow \Pth)$ can be proved,
and so can $\neg (\Ptw \rightarrow \Pth)$.

In proving \pnpnq, we are going to follow his proof.
Let $\Pon$ be \pnpeq,
$\Ptw$ an argument that an algorithm (say $\tmio$) exists, and
$\Pth$ an argument on the property of $\tmio$.
It will be seen that if \pnpeq $ $ then $\tmio$ has the property of $\Pth$,
otherwise it does not.

\section{Algorithm $\tmi{}{}$ and Cook's Theory}
Cook's Theory
\cite{gajo:ci}
says that
the accepting computation of
a non-deterministic poly-time Turing Machine on an input $x$
can be transformed in a polynomial time to a satisfiable instance,
denoted as $\cpr{}{}$,
of the satisfiability problem(SAT).

All the conditions of being an accepting computation are expressed, in $\cpr{}{}$, as a collection of Boolean clauses.
There are six groups of clauses in $\cpr{}{}$. Quoted from \cite{gajo:ci}, each group imposes restrictions as: \\
$G_{1}$: at each time $i$, Turing Machine $M$ is in exactly one state, \\
$G_{2}$: at each time $i$, the read-write head is scanning exactly one tape square, \\
$G_{3}$: at each time $i$, each tape square contains exactly one symbol from $\Gamma$, \\
$G_{4}$: at time $0$, the computation is in the initial configuration of its checking stage for input $x$, \\
$G_{5}$: by time $p(n)$, $M$ has entered state $q_{y}$ and hence has accepted $x$, \\
$G_{6}$: for each time $i$, $0 \leq i < p(n)$, the configuration of $M$ at time $i+1$ follows by a single application of the transition function $\delta$ from the configuration at time $i$.

Though the groups of clauses are designed to represent the runs of
poly-time Turing Machines,
they can be modified to represent the runs of longer-time ones.
As long as there exists any finite-length accepting computation path from a problem instance to an accepting state over a Turing Machine,
how long the path may be,
there may exist the corresponding clauses for each of all the transitions along the path.

Therefore, we may extend the meaning of an accepting computation as the representation of a finite run of a Turing Machine on one of its accepting input, regardless of the run time.
Note that, in \cite{gajo:ci}, the time for the transformation from an accepting computation to the corresponding SAT instance should have been kept within a polynomial, but the proof here does not regard any kind of run time, as long as the run is finite.

We may observe that
the clauses in
$\cpr{}{}$ can be divided into two parts: one is for the representation of
the given input, and the other the run of the Turing Machine on the input.
Let the \textit{input-part} be the clauses for the description of the input $x$, which is denoted as $\cpr{}{x}$.
Let the \textit{run-part} (denoted as $\cprr$ ) be the part of $\cpr{}{}$
such that the clauses for $\cpr{}{x}$
are cut off from $\cpr{}{}$, i.e.,
$\cprr$ is a part of $\cpr{}{}$ that actually represents the operations of the corresponding Turing Machine (grouped in \cite{gajo:ci} as $G_{1}$, $G_{2}$, $G_{3}$, $G_{5}$, $G_{6}$).

Let $\tmi{}{}$ be a Turing Machine algorithm, to which the input is a string denoted by $y$.
$\tmi{}{}$ is designed to include a finite number of $\cprr$'s (i.e., $\cpr{1}{r}, \cdots, \cpr{n}{r}$),
which are
trimmed from $\cpr{}{}$'s (i.e., $\cpr{1}{}, \cdots, \cpr{n}{}$),
where
$\cpr{}{}$'s are
arbitrarily selected accepting computations.
That is, $\cpr{i}{r}$  is formed by cutting off
$\cpr{i}{x}$ from $\cpr{i}{}$,
where $x$ is the input (represented by the initial configuration in $G_{4}$) of the computation that corresponds to $\cpr{i}{}$.
According to the run-parts included, countably many $\tmi{}{}$'s can be constructed, and they can be somehow ordered as:
$\tmi{1}{}$, $\tmi{2}{}$, $\cdots$, $\tmi{i}{}$, $\cdots$.
For $\tmi{i}{}$, its run-parts can be written as:
$\cpr{i1}{r}, \cpr{i2}{r}, \cdots, \cpr{im}{r}$.

Given an input $y$,
during the run of $\tmi{i}{}$,
$\cpr{}{y}$ and $\cpr{ij}{r}$ are concatenated,
forming $\cpr{ij}{}$ ($1 \leq j \leq m$).
For each $\cpr{ij}{}$, the module of SAT-solver,
which will accordingly be chosen to be
either a deterministic algorithm or non-deterministic,
performs the run for the satisfiability check.
If $\cpr{ij}{}$ is satisfiable
then $\tmi{i}{}$
increases its counter, and goes on to the next $\cpr{}{}$;
this process repeats up to $\cpr{im}{}$.
At the end of the counting, $\tmi{i}{}$ accept $y$ if the counter holds an odd number.
That is, the task of $\tmi{}{}$, at the given input of a finite string $y$, is
to count the number of satisfiable
$\cpr{}{}$'s, and accept $y$ if the counted number is odd.
Observe that it is not impossible for
$\cpr{ij}{}$ to be satisfiable
though the chances are usually rare,
and that the run time of $\tmi{i}{}$ shall be finite
because each of all the satisfiability check will take a finite time.
These observations ensure that
$\tmi{i}{}$ determines a set of acceptable strings,
supporting $\tmi{i}{}$ to be a Turing Machine.

\section{Algorithm $\tmio$ and the Theorem}
Let a \textit{particular transition table} of a Turing Machine
be a transition table particularly for
just one or two accepting problem instances,
where the computation time does not matter as long as it is finite.
So, a particular transition table
for an accepting problem instance
may produce an accepting computation
by running on a Turing Machine.
Observe that each of all accepting computations may have its particular transition table, i.e.,
the table can be built
by collecting all the distinguished transitions from the computation,
where we know that
a computation is a sequence of the transitions of configurations of a Turing Machine.

We then introduce an algorithm, denoted as $\tmio$,
which is one of $\tmi{}{}$'s with the property as follows.
Let $\widehat{ac}_{\tmi{}{}}$ be
the accepting computation of the run of $\tmi{}{}$ on an input $y$,
let \trt $ $ be a particular transition table for $\widehat{ac}_{\tmi{}{}}$,
let $\cpro$ be one of $\cpr{}{}$'s that appear during the run of $\tmi{}{}$, and
let $\widehat{ac}_{\cpro}$ be the accepting computation, which is described by the clauses of $\cpro$.
If \trt $ $ is also a particular transition table for $\widehat{ac}_{\cpro}$
then denote $\tmi{}{}$ as $\tmio$.

For the run of $\tmio$, we may consider two types of particular transition tables.
Let \dsat $ $ be the particular transition table, by which
$\tmio$ runs deterministically and
the SAT-solver module runs deterministically in a poly-time for the length of $\cpr{}{}$.
Analogously, we may have \ndsat, by which
$\tmio$ runs non-deterministically and
the SAT-solver module runs non-deterministically in a poly-time for the length of $\cpr{}{}$.

\begin{thm}\label{thm-main}
\pnpnq
\end{thm}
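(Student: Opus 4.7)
The plan is to realize the modus tollens template from the introduction with $\Pon$ being \pnpeq, $\Ptw$ the claim that $\tmio$ exists as specified, and $\Pth$ the defining coincidence that the particular transition table \trt\ of $\widehat{ac}_{\tmi{}{}}$ also serves as a particular transition table for $\widehat{ac}_{\cpro}$ for some $\cpro$ processed during the run. All three pieces are already assembled in the preceding sections; the job is to verify the single implication $\Pon\rightarrow(\Ptw\rightarrow\Pth)$ and to refute $\Ptw\rightarrow\Pth$ outright.

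The first step is to prove $\Pon\rightarrow(\Ptw\rightarrow\Pth)$. Assuming \pnpeq, SAT admits a deterministic polynomial-time decider, so the SAT-solver module of $\tmio$ may be taken deterministic and the entire execution falls under the \dsat\ regime. Because the run is then fully predictable, I would precompute its transition sequence, extract its particular transition table \trt, and embed as one of the run-parts a $\cpr{ij}{r}$ whose associated $\cpr{ij}{}$ encodes $\widehat{ac}_{\tmi{}{}}$ itself. The particular transition table of that encoded accepting computation is then, by construction, \trt, giving the required coincidence and hence $\Pth$.

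The second step is to establish $\neg(\Ptw\rightarrow\Pth)$ by exhibiting a concrete $\tmio$ (witnessing $\Ptw$) for which $\Pth$ fails. Without the assumption \pnpeq, the only way to obtain polynomial time in the SAT-solver module is to run it nondeterministically, so $\tmio$ is forced into the \ndsat\ regime. The accepting computation $\widehat{ac}_{\tmi{}{}}$ then contains nondeterministic branch points whose particular choices are not recorded in the clause structure of any $\cpro$ (each $\cpro$ encodes a deterministic-style configuration sequence on its own input $y$, concatenated with a fixed run-part). Consequently \trt\ cannot coincide with a particular transition table for any $\widehat{ac}_{\cpro}$, refuting $\Pth$. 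Combining with the first step, modus tollens yields $\neg\Pon$, i.e.\ \pnpnq.

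Main obstacle: The hard part will be making the notion of a \emph{particular transition table} precise enough that the two steps cut cleanly. Concretely, one must verify that in the first step the embedding-by-precomputation truly produces a syntactic equality of \trt\ with a table read off from $\widehat{ac}_{\cpro}$, rather than mere extensional agreement on the input $y$; and that in the second step the nondeterministic choices are genuinely irreconcilable with any table arising from an encoded $\cpro$, rather than being absorbable into some alternative accepting path of $\tmio$. A further delicate point is the self-referential flavor of Step 1: arranging for $\tmio$ to contain a run-part encoding its own accepting computation has the character of a Kleene fixed-point construction, and one must check that the odd-count acceptance rule does not cause the encoded computation to diverge from the actual one executed by $\tmio$.
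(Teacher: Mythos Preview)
Your choice of $\Pth$ does not match the paper's. In the paper, $\Pth$ is ``there exists \trt, which is \dsat'' --- i.e.\ the particular transition table is deterministic with a polynomial-time SAT-solver module. You have instead taken $\Pth$ to be the coincidence condition that \trt\ also serves $\widehat{ac}_{\cpro}$; but that coincidence is exactly the \emph{definition} of $\tmio$ and is therefore already contained in $\Ptw$. Under your reading $\Ptw\rightarrow\Pth$ is a tautology, so $\neg(\Ptw\rightarrow\Pth)$ is unattainable and the modus tollens collapses before it starts.

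Both of your steps accordingly diverge from the paper. For step~1 the paper does nothing self-referential: it simply notes that under \pnpeq\ the SAT-solver module can be implemented deterministically in polynomial time, so \trt\ can be taken to be \dsat; there is no Kleene-style embedding of $\widehat{ac}_{\tmio}$ into its own run-parts. For step~2 the paper first witnesses $\Ptw$ directly (merge two nondeterministic particular transition tables into a single \ndsat\ table), and then derives a contradiction from $\Ptw\rightarrow\Pth$ by a counting argument: a \dsat\ table driving both $\widehat{ac}_{\tmio}$ and $\widehat{ac}_{\cpro}$ from the same input forces the two computations to be identical, hence $i=k$, where $i$ and $k$ count the transitions of $\widehat{ac}_{\tmio}$ and $\widehat{ac}_{\cpro}$ respectively; yet loading all clauses of $\cpro$ onto the tape already gives $i>j$ (with $j$ the number of clauses of $\cpro$), and each transition is encoded by more than one clause so $j>k$, whence $i>j>k$. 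Your argument about nondeterministic branch points does not appear in the paper, and the $i>j>k$ versus $i=k$ contradiction --- the actual engine of the paper's step~2 --- is entirely absent from your proposal.
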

\begin{proof}
Let $\Pon$, $\Ptw$ and $\Pth$ be the following propositions: \\
$\Pon$: \pnpeq, \\
$\Ptw$: $\tmio$ exists, \\
$\Pth$: there exists \trt, which is \dsat.

By modus tollens, $(\Pon \rightarrow (\Ptw \rightarrow \Pth)) \wedge (\neg (\Ptw \rightarrow \Pth))$ may conclude $\neg \Pon$.

1) Proposition $\Pon \rightarrow (\Ptw \rightarrow \Pth)$ is to show that
if $\tmio$ exists then
there exists \trt, which is \dsat,
when \pnpeq $ $ is the antecedent.

By \pnpeq, there exists a deterministic poly-time SAT-solver, so the SAT-solver module in $\tmio$
can be implemented to be
a deterministic algorithm, which runs in a poly-time for the length of $\cpr{}{}$.
Thus, \trt $ $ can be \dsat.

2) For the latter part, we are to show $\neg (\Ptw \rightarrow \Pth)$.
By the inference rule,
if $\Ptw$ is true then
a contradiction from $\Ptw \rightarrow \Pth$
implies $\neg (\Ptw \rightarrow \Pth)$.

We can show that $\Ptw$ is true, as follows.
For any chosen $\cpro$,
build two non-deterministic particular transition tables for
$\widehat{ac}_{\tmio}$ and $\widehat{ac}_{\cpro}$ separately, and then merge the two
so that
one of the two computations can be
chosen selectively
from the starting state during the run.
$\tmio$ may exist by this \trt, which is \ndsat.

Next we are to derive a contradiction from $\Ptw \rightarrow \Pth$.
Observe that
$\Ptw \rightarrow \Pth$ implies specifically this $argument$:
if $\tmio$ exists then
there exists \trt, which is \dsat $ $ particular transition table
for both $\widehat{ac}_{\tmio}$ and $\widehat{ac}_{\cpro}$.
Then, by the $argument$
together with the
definition (of $\tmio$) that
$\widehat{ac}_{\tmio}$ and $\widehat{ac}_{\cpro}$ are the accepting computations that share the same input,
it is concluded that both $\widehat{ac}_{\tmio}$ and $\widehat{ac}_{\cpro}$ are exactly the same computation, i.e., all the transitions of
the configurations of $\widehat{ac}_{\tmio}$ and
those of $\widehat{ac}_{\cpro}$ are exactly the same.

According to \cite{gajo:ci},
with the assignment of the truth-values, which are acquired by the SAT-solver module,
to the the clauses of $\cpro$,
we later may acquire
$\widehat{ac}_{\cpro}$ in the form of
the sequence of transitions of configuration of a Turing Machine.

Now, let $i$ be the number of
the transitions between the configurations in
$\widehat{ac}_{\tmio}$,
$j$ the number of the clauses of $\cpro$, and
$k$ the number of
the transitions between the configurations in $\widehat{ac}_{\cpro}$.

Since,
at the least, all the clauses of $\cpro$ should once be loaded on the tape of the Turing Machine as well as other $\cpr{}{}$'s,
we may have $i > j$, and
since it is addressed in \cite{gajo:ci} that
each transition of an accepting computation is described by more than one clauses, we may have $j > k$,
resulting $i > j > k$.

However, $i = k$ because it is concluded above that
both $\widehat{ac}_{\tmio}$ and $\widehat{ac}_{\cpro}$ are exactly the same computation.
This is a contradiction from $(\Ptw \rightarrow \Pth)$.
\end{proof}

As a commentary, to make sure that the proof does not fall into the similar oddity of the well-known relativizations of \pnpvs,
it'd be better to consider this argument: \\
$(\Qon \rightarrow (\Qtw \rightarrow \Qth)) \wedge (\neg (\Qtw \rightarrow \Qth))$, where \\
$\Qon$: \pnpnq, \\
$\Qtw$: $\tmio$ exists, \\
$\Qth$: there exists \trt, which is \ndsat.

Similarly, by the antecedent \pnpnq,
there exist only non-deterministic algorithm for the SAT-solver module, so
$\Qon \rightarrow (\Qtw \rightarrow \Qth)$
can analogously be proved.
For the latter part,
$\neg (\Qtw \rightarrow \Qth)$ become true
if $\Qtw \rightarrow \Qth$ implies $i = k$, as in the proof.
If so, we have the result that \pnpeq.

However, $\Qtw \rightarrow \Qth$ may imply $i = k$
if $\widehat{ac}_{\tmio}$ and $\widehat{ac}_{\cpro}$
are the same,
but we know, referring to the proof, that
there is no such \trt $ $ that makes
$\widehat{ac}_{\tmio}$ and $\widehat{ac}_{\cpro}$
the same.
In fact, we may build many \trt's that does not incur $i = k$
from $\Qtw \rightarrow \Qth$,
as mentioned in the proof.

\clearpage
\setcounter{section}{0}
\section*{Replies to the Critiques}
Followings are the replies to the critiques and comments for the proof. Replies to future critiques will be added below.
Author does not expect the success of the proof, rather he is waiting to see what is wrong in \cite{j}.

\subsection{Replies to the Critique of J. Kim's ``P is not equal to NP by Modus Tollens''}
\label{subsection 0.1}

It is quite likely that the Critique of \cite{hassin}
has errors. Authors of \cite{hassin} has a good understanding on \cite{j}, but missed some points.\\

The tables in 2.4 of \cite{hassin} are exactly what the author of \cite{j} expected. (Thanks.)\\

``\textit{3.1 Invalidity of logical argument}'' of \cite{hassin}
said that `\trt $ $ is \dsat' is a fact,
but \trt $ $ can also be \ndsat $ $ especially when the SAT-solver module is implemented non-deterministically.
Notice that the TM algorithms designed for the proof are not practical programming.
As a result, `\trt $ $ is \dsat' is not always true.
In addition, $\tmio$ may exist when \trt $ $ is \ndsat.\\

For \textit{3.2} of \cite{hassin},
the author of \cite{j}
would rather choose ``\textit{3.2.2 Second interpretation}.''
The critique mentioned :
``Note that these accepting computations are not necessarily the same as the accepting computations produced by their respective Turing machines' transition tables.''

Author of \cite{j} agree with this, but, for the proof of \cite{j}, it is enough
if there exist more than one \textbf{case}s that
\textit{the accepting computations are the same as the accepting computations produced by their respective Turing machines’ transition tables}.
The proof derives the contradiction from one of the \textbf{case}s.
The \textbf{case} can be seen by the merged table in 2.4.\\

In 3.3, the critique said ``At the end of his paper, Kim verifies that \ldots.''
The part,
`\textit{the end of his paper},'
is to show that the proof will not be fallen into the relativizations of \pnpvs.
The author of \cite{j}
wanted to show that if the proof is correct then there is no room
for the proof
to be fallen into the relativization matters.
Perhaps, this part can be omitted for now,
while questioning on the correctness of the proof itself.

\paragraph{Comments}
Authors of \cite{hassin} show a good understanding on \cite{j}, but have the errors as above that seem to imply that \cite{j} has not yet been refuted.

\subsubsection{Complementary to Subsection \ref{subsection 0.1}}

Considering the view of the the critique of \cite{hassin},
author of \cite{j} would like to add more explanation to guide other readers for easier and quicker  understanding.

First of all, author of \cite{j} does not think
that the proof of \thmref{thm-main} in \cite{j}
has to be
divided into two interpretations:
\textit{3.2.1 First interpretation} and
\textit{3.2.2 Second interpretation} as in \cite{hassin}.

It seems that
authors of \cite{hassin} had the two interpretations
because
they think that
\trt $ $ and the two accepting computations,
$\widehat{ac}_{\tmi{}{}}$ and $\widehat{ac}_{\cpro}$,
should be derived from their respective Turing machines' transition tables
so that the property $i > j > k$ can be applied correctly, to ensure a consistent and coherent frame for the proof, in the proof of \thmref{thm-main}.

However, such a frame does not let
the proof of \thmref{thm-main} reach the completion: this is what
authors of \cite{hassin} claim.

If the proof of \thmref{thm-main} is correct,
author of \cite{j} is claiming a larger frame, in which
a consistent and coherent proof can be ensured.

We see in the proof that there are two ways that \trt's are made.
Firstly, in part 1) of the proof, \trt $ $, which is \dsat, comes into existence
since there exist
the respective Turing machine and its transition table. In fact, this \trt $ $ is a special case of the transition table of the respective Turing machine, $\tmio$.
As mentioned, $\tmi{}{}$, which is the base of $\tmio$, is composed of
the SAT-solver module and the remaining part,
where the module and the part
can be implemented either way deterministically or non-deterministically.
Actually, since the remaining part of $\tmi{}{}$ can be programmed simply, it can easily be implemented deterministically. Thus, by the antecedent
$\Pon$, all parts of $\tmio$ can be implemented deterministically,
resulting the existence of \trt $ $, which is \dsat.
The second of making \trt $ $ is the way that the tables are merged, which is rephrased in
 2.4 of \cite{hassin}.
So, this \trt $ $ is not related to the respective Turing machine.

By the way, the property $i > j > k$ is the one for the relationship between the accepting computations, $\widehat{ac}_{\tmio}$ and $\widehat{ac}_{\cpro}$, in $\tmio$: nothing about \trt $ $ is considered in $i > j > k$.
Therefore, how \trt $ $ is made does not matter in the proof,
just the existence/non-existence of \trt $ $ for $\tmio$ is important.
This dispenses with the considerations of the two interpretations of \cite{hassin}.

\subsection{Extension of the Proof: Why failed so far}
\label{subsection 0.2}

If the proof of \thmref{thm-main} in \cite{j} is correct, probably we may see why the lower-bound, which is expected to be a deterministic exponential/super-polynomial time, of NP-Complete problems has not yet been found.

By substituting $\Pon$ as ``there exist deterministic exponential/super-polynomial time Turing Machines that solves SAT''
and modifying the definition of \dsat $ $ to be
``the particular transition table, by which
$\tmio$ runs deterministically and
the SAT-solver module runs deterministically in an \textit{exponential/super-polynomial} time for the length of $\cpr{}{}$,''
\thmref{thm-main} can be analogously extended to claim that \textit{there does not exist
any deterministic exponential/super-polynomial time Turing Machine that solves SAT}.

Then, what about the fact that the brute-force search solves SAT?
We may conclude that
the brute-force search is not equal to the deterministic exponential/super-polynomial time Turing Machine algorithm, as follows.

Though the brute-force search for SAT problem can be implemented by a deterministic transition table that causes an exponential time, the run by the table on an input is a non-deterministic computing because the run is
a sequence of generating(guessing) candidate solutions and checking them. Since the corresponding computing is non-deterministic, the particular transition table for the brute-force search is inherently non-deterministic.

So, letting $\Pon$ be ``there exist a brute-force search that solves SAT,''
$\Pon \rightarrow (\Ptw \rightarrow \Pth)$ in the proof is not true,
while it is true when $\Pon$ is as above in this section.

\clearpage
\section*{\begin{center}  Addendum for version2 of this article  \end{center}}
This is a clarification on the unclear part of the proof.
What was unclear is that:
the existence of $\tmio$ can be determined by the existence of the \textit{particular transition table}, but the type of the table was unclear; it can be either the transition table of Deterministic Turing Machine(\dtm{}) or that of Non-Deterministic Turing Machine(\ndtm{}).\\

Consider when \pnpnq{}, only \ndtm{} may embody all the algorithms
of all the Poly-time problems(\Pproblems) because \dtm{} can not embody algorithms of NP-Complete(\npcomp{}) problems.
Let \textit{All Poly Machine}(\apmachine{}) be the Machine that can embody all the algorithms of \Pproblems{}. Actually, either one or both models of Turing Machines(\dtm{}, \ndtm{}) can be designated as \apmachine{}. \apmachine{} varies as follows by the given relationship between $\Pp$ and $\NP$. \\

As mentioned, when \pnpnq{}, (\ndtm{} is \apmachine{}).
When \pnpeq{}, both \dtm{} and \ndtm{} can embody all the algorithms of \Pproblems{},
i.e., ((\ndtm{} is \apmachine{}) $\land$ (\dtm{} is \apmachine{})).
Additionally, when (\pnpnq{}) $\lor$ (\pnpeq{}), (\ndtm{} is \apmachine{}) because \dtm{} can not embody all the algorithms in case (\pnpnq{}), and this can also be seen by the following logical statements:\\
\begin{equation} \nonumber
\begin{split}
\text{(\ndtm{} is \apmachine{})} \lor
\text{((\ndtm{} is \apmachine{})} \land
\text{(\dtm{} is \apmachine{}))} \\
\equiv
\text{(\ndtm{} is \apmachine{})} \land
\text{((\ndtm{} is \apmachine{})} \lor
\text{(\dtm{} is \apmachine{})),}
\end{split}
\end{equation}
we see herein (\dtm{} is \apmachine{}) is negligible.\\

Now, declare that the additional condition of being \textit{particular transition table}(\trt{}) of $\tmio$ is
that: \trt{} can be implemented by the designated \apmachine{}'s.\\

For the proof of $\Pon \rightarrow (\Ptw \rightarrow \Pth)$,
since it is equivalent to $\neg (\Pon \wedge \Ptw) \vee \Pth$, we are to prove $\Pon \wedge \Ptw$ is false.
Then, $\Pon \wedge \Ptw$ is equivalent to $\neg (\Pon \rightarrow \neg \Ptw )$, we are to prove $\Pon \rightarrow \neg \Ptw$ is true,
where $\Pon \rightarrow \neg \Ptw$ is an argument that:
if \pnpeq{} then $\tmio$ does not exist.
As mentioned, when \pnpeq{}, both \dtm{} and \ndtm{} are designated \apmachine{}'s,
and $\tmio$ exists for \ndtm{} as is shown in the proof, but not for \dtm{} as is in the proof too. Therefore, it is true that if \pnpeq{} then $\tmio$ does not exist. \\

For the proof of $\neg(\Ptw \rightarrow \Pth)$, we proved in the proof that $\Ptw$ is true,
where, for better understanding, now we may rewrite $\Ptw$ as:
$1 \rightarrow \Ptw$,  which is equivalent to 
((\pnpnq{}) $\lor$ (\pnpeq{})) $\rightarrow \Ptw$.
Since \ndtm{} is the designated \apmachine{} for ((\pnpnq{}) $\lor$ (\pnpeq{})), we need to prove that the particular transition tables of $\tmio$ can be implemented by \ndtm{},
as is shown in the proof.\\

\end{document}